\documentclass[journal,10 pt]{IEEEtran} 


\usepackage{graphicx}
\usepackage{textcomp}
\usepackage{epsfig} 
\usepackage{epsfig,color,amsmath,cite}
\usepackage{amsthm} 
\usepackage{amsmath}    
\usepackage{bm}
\usepackage{epstopdf}
\usepackage{amssymb}
\usepackage{url}
\usepackage{multirow}
\usepackage{hhline}
\usepackage{booktabs}
\usepackage[linesnumbered,boxed,commentsnumbered,ruled,vlined,longend]{algorithm2e}
\usepackage{comment}

\makeatother
\DeclareMathAlphabet\mathbfcal{OMS}{cmsy}{b}{n}

\newtheorem{mydef}{Definition}

\newtheorem{myrem}{Remark}
\newtheorem{asmp}{Assumption}

\newtheorem{myprs}{Proposition}



\makeatletter

\makeatother

\usepackage{stackengine}

\newcommand{\mat}[1]{\boldsymbol{#1}}

\providecommand{\eye}{\mat{I}}
\providecommand{\mA}{\ensuremath{\mat{A}}}

\providecommand{\mG}{\ensuremath{\mat{G}}}




\newcommand{\m}{\boldsymbol}
\allowdisplaybreaks[4]
\pdfminorversion=4
\usepackage[colorlinks = true,
linkcolor = blue,
urlcolor  = blue,
citecolor = blue,
anchorcolor = blue]{hyperref}


\newcommand{\mr}[1]{\mathrm{#1}}
\usepackage[framemethod=TikZ]{mdframed}
\mdfdefinestyle{MyFrame}{%
	linecolor=black,
	outerlinewidth=1.25pt,
	roundcorner=1.25pt,
	innerrightmargin=5pt,
	innerleftmargin=5pt,}
	

\usepackage[noabbrev]{cleveref}

\usepackage{mathtools}

\DeclarePairedDelimiter\abs{\lvert}{\rvert}%
\DeclarePairedDelimiter\norm{\lVert}{\rVert}%

\makeatletter
\let\oldabs\abs
\def\abs{\@ifstar{\oldabs}{\oldabs*}}
\let\oldnorm\norm
\def\norm{\@ifstar{\oldnorm}{\oldnorm*}}
\makeatother


\usepackage[english]{babel}
\usepackage[utf8]{inputenc}
\usepackage[super]{nth}

\usepackage{graphicx}
\usepackage{float}
\usepackage[caption = false]{subfig}

\usepackage{array}
\usepackage{threeparttable}


\usepackage[english]{babel}
\usepackage[utf8]{inputenc}
\usepackage[super]{nth}

\usepackage{pifont}

\usepackage{mathtools}

\usepackage{mleftright}
\usepackage{xparse}

\NewDocumentCommand{\evaluat}{sO{\big}mm}{%
	\IfBooleanTF{#1}
	{\mleft. #3 \mright|_{#4}}
	{#3#2|_{#4}}%
}


\usepackage{lettrine} 


\usepackage{balance} 



\title{\Large \vspace{1cm} \LARGE \centering {\textsc{\textbf{ODE Transformations of Nonlinear DAE Power Systems*}}}}


	
	\author{Mohamad H. Kazma, \textit{Graduate Student Member, IEEE} and Ahmad F. Tah$\text{a}^{\diamond}$, \textit{Member, IEEE} 
\vspace{-0.2cm}
\thanks{
	$^\diamond$Corresponding author.  This work is supported by National Science Foundation under Grants 2152450 and 2151571. The authors are with  the Civil \& Environmental Engineering and Electrical \& Computer Engineering Departments, Vanderbilt University, 2201 West End Ave, Nashville, Tennessee 37235. Emails: mohamad.h.kazma@vanderbilt.edu, ahmad.taha@vanderbilt.edu.\newline
	*The discrepancy between this version of the paper and the PES General Meeting 2024 version (available on IEEE Explore) is due to the imposed page limit for all conference papers. This version includes additional comments as well as an easier paper layout for better readability.}
	}
	\begin{document}



\maketitle

\markboth{To appear in the IEEE PES General Meeting 2024, Seattle, Washington, July 2024}{}

\begin{abstract}
	Dynamic power system models are instrumental in real-time stability, monitoring, and control. Such models are traditionally posed as systems of nonlinear differential algebraic equations (DAEs): the dynamical part models generator transients and the algebraic one captures network power flow. While the literature on control and monitoring for ordinary differential equation (ODE) models of power systems is indeed rich, that on DAE systems is \textit{not}. DAE system theory is less understood in the context of power system dynamics. To that end, this paper presents two new mathematical transformations for nonlinear DAE models that yield nonlinear ODE models while retaining the complete nonlinear DAE structure and algebraic variables. Such transformations make (more accurate) power system DAE models more amenable to a host of control and state estimation algorithms designed for ODE dynamical systems. We showcase that the proposed models are effective, simple, and computationally scalable. 
\end{abstract}
\begin{IEEEkeywords}
	Time-domain simulation, transient stability analysis, nonlinear descriptor models, power systems.
\end{IEEEkeywords}

\section{Introduction}\label{sec:Introduction}
\lettrine[lines=2]{P}{ower} systems monitoring, state estimation, control, and transient stability analysis are all reliant on high-fidelity models of multi-machine power systems.
In power grids, transient stability analysis determines how the power system maintains synchronicity under time-varying conditions and large uncertainties from load disturbances~\cite{Aolaritei2018,Cui2022}.
Such analysis relies on time-domain simulations of the system that is expressed as a set of differential algebraic equations (DAEs)~\cite{Milano2016,Milano2016a}.
The differential-algebraic nature couples the system dynamics with power flow constraints, thus resulting in a more accurate model. Nonlinear DAEs are an extreme case of stiff dynamical systems~\cite{Guanabaraa}, meaning that the system has time constants that span several orders of magnitude; in particular, the algebraic constraints exhibit null time constants.

In general, nonlinear DAEs are solved using implicit discretization schemes~\cite{Astic1994b}. 
Multi-step methods offer stable and efficient schemes when dealing with nonlinear DAEs~\cite{Brayton1972}. 
Such discrete-time modeling methods include: backward differential formulas (BDF)~\cite{Gear1971, Milano2022}, backward Euler (BE) method~\cite{Milano2022}, and trapezoidal implicit (TI) method~\cite{Potra2006,Milano2022}. Simulating discrete-time models requires an integrative time-step algorithm~\cite{Lara2023}, and the solvability of power system DAEs under implicit integration methods is well-established. The Newton-Raphson (NR) method~\cite{Conejo2018, Milano2022} is generally implemented within power system simulation packages to solve discretized DAEs~\cite{Sauer2017}. Despite such well-developed time-domain numerical solutions, from a systems' theory perspective, the literature on nonlinear DAE power networks is limited---unlike that of ordinary differential equations (ODE models)~\cite{Grob2016}.


Common modeling for systems' control and estimation is based on an ODE formulation; this is due to the aforementioned limitation on system theory for DAEs. Typically, the formulation of ODE systems from DAE models is performed by either neglecting the algebraic constraints or by formulating a decoupled modeling approach~\cite{Nugroho2022}. The simplified models potentially limit the transient stability simulations and, ultimately, the estimation and control performance. Time-domain simulations resulting from the full DAE models of a power network can, for instance, give an accurate depiction of the dynamics under topological changes triggered by faults and be modeled to include uncertain loads from renewable energy resources. 

The limitation on model fidelity from a control and estimation perspective is expressed in the form of the following research question: \textit{How do we extend existing systems control theory, developed for ODE dynamical systems, to accurately apply it to DAE models?}
Descriptor systems---arising from DAEs models---appear in numerous applications, with a few examples being chemical, electrical and mechanical systems. For such reason, there is a rise in interest towards translating control and stability theory to the analysis of descriptor systems~\cite{Jaiswal2022}. 

Recent studies---see,~\cite{Kunkel2008,Althoff2014,Liu2020,Jaiswal2022,Nugroho2022, Nadeem2022} 
and reference therein---present literature on developing the state estimation and control theory of DAE systems, in particular that of linear DAEs. 
However, in this paper, we aim to address the limitations on state estimation, control, and transient stability analysis of power systems by giving a new perspective on DAE to ODE system modeling. Therefore, we instead attempt to address the posed research question in the form of: \textit{Is there a methodology to accurately restructure the DAE system into an ODE model without loss of information and therefore exploit existing ODE systems control theoretic?}

To that end, in this paper we introduce two simple yet effective methods to transform nonlinear DAE power system models into ODEs. The idea is to formulate ODE-structured representations of the network dynamics while depicting the full nonlinear DAE structure along with the algebraic constraints. These transformations allow the utilization of the rich literature on control and estimation of ODE models of power systems.
 The main contributions of this paper are:
\begin{itemize}
\item We present two mathematical transformations that retain the complete nonlinear DAE dynamics while achieving a nonlinear ODE-structure. The first transformation is based on applying the implicit function theorem (IFT) to reformulate the algebraic constraints into an ODE model (Section~\ref{sec:ODE-model}). The second transformation is approximation-based; it yields an effective approximation towards ODE-structured algebraic constraints (Section~\ref{sec:app-ODE-model}).
\item We show that it is viable to model and perform both continuous and discrete time-domain simulations for transient stability analysis on the proposed systems. We also illustrate that the resulting Hessian matrices computations that arise when simulating the discretized IFT transformed system can be approximated using finite difference approximations (Section~\ref{sec:sim-power-system}).
\item We assess the validity of the proposed models on standard power networks under transient time-domain simulations subject to load disturbances (Section~\ref{sec:case-studies}).\vspace{-0.1cm}
\end{itemize}


We note that, in this paper, we do not focus on extending DAE systems stability and control theory, or on formulating new discretization techniques for such models, but rather on introducing the aforementioned transformations onto the nonlinear DAE power system. The transformations' significance is accentuated by the fact that we obtain nonlinear ODE structured models that retain the complete nonlinear DAE structure along with the algebraic variables. Utilizing the ODE structured models, the rich literature on control and monitoring of ODE dynamical systems can be exploited.

\section{Proposed Nonlinear DAE Transformations}\label{sec:nonlinearmodel} 
In this paper, we consider a general nonlinear model of multi-machine power system dynamics $(\mathcal{N} ,\mathcal{E})$, where $\mathcal{E} \subseteq \mathcal{N} \times \mathcal{N}$ is the set of transmission lines, $\mathcal{N} = \mathcal{G} \cup \mathcal{L}$ is the set of all buses in the network, while $\mathcal{G}$ and $\mathcal{L}$ are the set of generator and load buses, respectively.
The model represents both the generator dynamics and algebraic constraints. Readers can refer to~\cite[Ch. 7]{Sauer2017} for the full description of the power network utilized within this work. Note that the theoretical developments herein still apply to any semi-explicit nonlinear DAE model of a power system written as
\begin{subequations}~\label{eq:semi_NDAE_rep}
\begin{align}
	\textit{generator dynamics}:	\;\; \dot{\m x}_{d} &=  \m{f}(\m x_d,\m x_a, \m u), \label{X_d} \\
	\textit{algebraic constraints}:	\;\; \m{0} & = \m{g}(\m x_d, \m x_a),\label{X_a}
\end{align} 
\end{subequations}
where the dynamic states of the generator are defined as $\m{x}_{d} := \m{x}_{d}(t) \in \mathbb{R}^{n_d}$, the algebraic states as $\m{x}_{a} := \m{x}_{a}(t) \in \mathbb{R}^{n_a}$ and the system input as $\m{u}:=\m{u}(t) \in \mathbb{R}^{n_u}$. 
Functions $\m{f}(\cdot): \mathbb{R}^{n_d}\times\mathbb{R}^{n_a}\times\mathbb{R}^{n_u}\rightarrow\mathbb{R}^{n_d}$ and $\m{g}(\cdot): \mathbb{R}^{n_d}\times\mathbb{R}^{n_a}\rightarrow\mathbb{R}^{n_a}$ are nonlinear and define the system dynamics and power flows. 

The existence of a solution for nonlinear DAEs can be determined by proving that a DAE is \textit{strangeness-free}~\cite[Hypothesis 4.2]{Volker2005}, i.e., the strangeness index is equal to zero. This index is a generalization of the differentiation index of DAEs. Refer to~\cite{Volker2005} for the detailed hypothesis that defines the strangeness index of nonlinear DAEs.
\begin{mydef}~\label{def:inde(1)}
The differentiation index~\cite{Campbell1995,Volker2005,Chen2021} 
of descriptor system refers to the number of differentiations required to obtain ODEs using algebraic manipulations.
\end{mydef}

For linear DAE systems, a set of differential algebraic equations is of index one if and only if it is regular. Regularity is an important property for linearized DAEs; it is a condition for the existence of a consistent unique solution for every initial condition~\cite{Grob2016}.
\begin{mydef}~\label{def:regular}
Regularity of a linearized DAE~\eqref{eq:semi_NDAE_rep} around an initial state can be characterized by matrix pair $(\m{E}, \m{A})$, such that it is regular if and only if $\mr{det}(s\m{E} - \m{A}) \neq 0 $ for $s \in \mathbb{C}$.
\end{mydef}

The linearized representation around an operating point of the power system DAE~\eqref{eq:semi_NDAE_rep} can be rewritten as
\begin{equation}~\label{eq:semi_NDAE_rep_linear}
\m{E}\dot{\m{x}} = \m{A}\m{x} + \m{B}\m{u},
\end{equation}
where $\m{E} \in \mathbb{R}^{n_d+n_a}$ represents the singular mass matrix that has ones on its diagonal entry for the differential equations and zeros for the algebraic equations. The constant state-space matrices are defined as $\m{A} \in \mathbb{R}^{n_d+n_a}$ and $\m{B}\in \mathbb{R}^{n_u}$.

In~\cite{Nugroho2022}, it is shown that the linearized representation of power system~\eqref{eq:semi_NDAE_rep} has a differentiation index of one and is regular. Such condition guarantees that for each consistent initial condition, a unique solution exists. Under the linearized model, this condition holds true if and only if there exist paths whereby every load is connected to a generator bus.
To prove a solution is unique for nonlinear DAEs is rather complex and still considered an open problem~\cite{Grob2016}. 
We note here that proving the linearized dynamic system~\eqref{eq:semi_NDAE_rep_linear} is regular and of index one is a prerequisite for providing evidence that the nonlinear system can be strangeness-free. The differentiation index, regularity and strangeness-free property of a nonlinear DAE system ensure the solvability of the system.
Considering the aforementioned results from ~\cite{Nugroho2022} and the complexity of proving such condition for nonlinear DAEs, the following assumption holds true within this paper. 
\begin{asmp}~\label{assump:index}
The DAE~\eqref{eq:semi_NDAE_rep} is strangeness-free, of differentiation index one and regular. Thus, under any consistent initial conditions a unique solution exists and the partial derivatives of $\m{f}(\cdot)$ and $\m{g}(\cdot)$ with respect to $\m{x}_{d}$ and $\m{x}_{a}$ are non-singular~\cite{Volker2005}. 
\end{asmp}

The aforementioned assumption is mild and holds true for the test cases considered in the numerical studies section. Under such conditions, the two transformations that reformulate~\eqref{eq:semi_NDAE_rep} into nonlinear ODEs can be posed. 
\begin{myrem}\label{rem:Solution}
The time-domain numerical solvability of the DAE system~\eqref{eq:semi_NDAE_rep} implies that a unique solution for different inputs and consistent initial conditions exists~\cite{Crow2015}.
\end{myrem}

Hence, the time-domain simulations presented in Section~\ref{sec:case-studies} validate Assumption~\ref{assump:index} for the power system represented as~\eqref{eq:semi_NDAE_rep} and therefore provide explicit proof that the nonlinear power system DAE is strangeness-free, of differentiation index one and regular.


\subsection{IFT-Based Nonlinear DAE Model: $\mr{ODE}$-$\mr{DAE}$}\label{sec:ODE-model} 
The first transformation relies on applying the IFT~\cite[Theorem 3.3.1]{Krantz2013} to resolve the algebraic constraints into ODEs $(\mr{ODE}$-$\mr{DAE})$. This method entails differentiating the algebraic constraints in~\eqref{X_a} with respect to time variable $t$. 	
\begin{myprs}~\label{propo:ODE-DAE}
Consider the nonlinear descriptor system~\eqref{eq:semi_NDAE_rep} and that Assumption~\ref{assump:index} holds true for any initial condition and control input, then the descriptor system can be restructured into a nonlinear ODE system by applying the IFT~\cite[Theorem 3.3.1]{Krantz2013} to resolve the algebraic constraints into ODEs. As such, the nonlinear descriptor system~\eqref{eq:semi_NDAE_rep} can be rewritten as
\begin{subequations}~\label{eq:semi_NDAE_rep-ODE}
	\begin{align}
		\dot{\m x}_{d} &=  \m{f}(\m x_d,\m x_a, \m u), \label{X_d_ode} \\
		\dot{\m{x}}_a & = 	-(\mG_{\m{x}_a})^{-1}\mG_{\m{x}_d}\m{f}(\m{x}_d,\m{x}_a,\m{u}) = \tilde{\m{g}}(\m{x}_d,\m{x}_a,\m{u}).\label{X_a_ode}	\vspace{-0.1cm}
	\end{align} 
\end{subequations}
\end{myprs}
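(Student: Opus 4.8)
The plan is to obtain the second equation of~\eqref{eq:semi_NDAE_rep-ODE} by differentiating the algebraic constraint~\eqref{X_a} along system trajectories and then exploiting Assumption~\ref{assump:index} to invert the Jacobian $\mG_{\m{x}_a} := \partial\m{g}/\partial\m{x}_a$. First I would fix a consistent initial condition, i.e., a pair $(\m{x}_d(0),\m{x}_a(0))$ with $\m{g}(\m{x}_d(0),\m{x}_a(0)) = \m{0}$, and note that along any $C^1$ solution of~\eqref{eq:semi_NDAE_rep} the identity $\m{g}(\m{x}_d(t),\m{x}_a(t)) \equiv \m{0}$ holds for all $t$. Differentiating this identity with respect to $t$ and applying the chain rule gives $\mG_{\m{x}_d}\dot{\m{x}}_d + \mG_{\m{x}_a}\dot{\m{x}}_a = \m{0}$. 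Since Assumption~\ref{assump:index} (strangeness-free, index one) guarantees that $\mG_{\m{x}_a}$ is non-singular, left-multiplying by $(\mG_{\m{x}_a})^{-1}$ and substituting $\dot{\m{x}}_d = \m{f}(\m{x}_d,\m{x}_a,\m{u})$ from~\eqref{X_d} yields~\eqref{X_a_ode} verbatim; equation~\eqref{X_d_ode} is just~\eqref{X_d} restated.

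Second, to justify that~\eqref{eq:semi_NDAE_rep-ODE} is a genuine ODE \emph{reformulation} rather than only a necessary consequence, I would invoke the IFT~\cite[Theorem 3.3.1]{Krantz2013}: around any point on the constraint manifold $\{\m{g} = \m{0}\}$, non-singularity of $\mG_{\m{x}_a}$ implies $\m{g}(\m{x}_d,\m{x}_a) = \m{0}$ can be solved locally for $\m{x}_a = \m{h}(\m{x}_d)$ with $\m{h} \in C^1$ and $\partial\m{h}/\partial\m{x}_d = -(\mG_{\m{x}_a})^{-1}\mG_{\m{x}_d}$; then $\dot{\m{x}}_a = (\partial\m{h}/\partial\m{x}_d)\,\dot{\m{x}}_d$ reproduces~\eqref{X_a_ode}, confirming that the time-differentiation route and the IFT route agree. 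The last step is to check equivalence of the solution sets: I would verify that the constraint surface is invariant under the flow of~\eqref{eq:semi_NDAE_rep-ODE} by computing $\tfrac{d}{dt}\m{g}(\m{x}_d,\m{x}_a) = \mG_{\m{x}_d}\m{f} + \mG_{\m{x}_a}\tilde{\m{g}} = \mG_{\m{x}_d}\m{f} - \mG_{\m{x}_a}(\mG_{\m{x}_a})^{-1}\mG_{\m{x}_d}\m{f} = \m{0}$, so any solution of~\eqref{eq:semi_NDAE_rep-ODE} launched from a consistent initial condition keeps $\m{g}\equiv\m{0}$ and therefore solves~\eqref{eq:semi_NDAE_rep}; the reverse inclusion is exactly the differentiation step above, and uniqueness on both sides follows from Assumption~\ref{assump:index} together with local Lipschitz continuity of $\tilde{\m{g}}$, inherited from the $C^1$-smoothness of $\m{f},\m{g}$ and continuity of $(\mG_{\m{x}_a})^{-1}$.

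The main obstacle I anticipate is the \emph{local-to-global} gap: the IFT only supplies $\m{h}$ on a neighborhood of a single point, whereas the proposition asserts an ODE valid along the entire trajectory. This is bridged precisely by Assumption~\ref{assump:index}, which posits non-singularity of $\mG_{\m{x}_a}$ on the whole region of interest, so the local solve maps patch consistently along trajectories. One must also be careful that the differentiation argument presupposes $\m{g}\in C^1$ and that consistency of the initial condition is essential---with an inconsistent start,~\eqref{X_a_ode} remains a perfectly well-posed ODE but its solution no longer tracks the DAE solution. Beyond these points the argument reduces to routine chain-rule bookkeeping and a manifold-invariance check, so I would keep the written proof short.
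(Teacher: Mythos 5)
Your core computation is exactly the paper's proof: differentiate $\m{g}(\m{x}_d,\m{x}_a)=\m{0}$ along trajectories via the chain rule, use Assumption~\ref{assump:index} to invert $\mG_{\m{x}_a}$, and substitute $\dot{\m{x}}_d=\m{f}(\m{x}_d,\m{x}_a,\m{u})$ to obtain~\eqref{X_a_ode}. Where you go beyond the paper is in the second half: the paper stops after showing that~\eqref{X_a_ode} is a \emph{necessary consequence} of the DAE, whereas you also verify the converse by checking that $\tfrac{d}{dt}\m{g}=\mG_{\m{x}_d}\m{f}+\mG_{\m{x}_a}\tilde{\m{g}}=\m{0}$, so the constraint manifold is invariant under the flow of~\eqref{eq:semi_NDAE_rep-ODE} and any solution launched from a consistent initial condition also solves~\eqref{eq:semi_NDAE_rep}. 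That invariance check, together with your remark that consistency of the initial condition is essential (an inconsistent start yields a well-posed ODE that no longer tracks the DAE), is a genuine strengthening of the claimed equivalence that the paper's proof leaves implicit; the cost is only a few extra lines, and the local-to-global concern you raise is indeed discharged by the global non-singularity posited in Assumption~\ref{assump:index}.
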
	
\begin{proof}
Under Assumption~\ref{assump:index}, the partial derivative $\frac{\partial \m{g}(\m{x}_d,\m{x}_a) }{\partial \m{x}_a}$ is non-singular, refer to~\cite{Grob2016,Nugroho2022}. Then, implicitly differentiating the algebraic constraints~\eqref{X_a} using the implicit function theorem we obtain the following
\begin{equation}~\label{eq:partial_alg_const}
	0 = \frac{\partial \m{g}(\m{x}_d,\m{x}_a) }{\partial \m{x}_d}\frac{\partial \m{x}_d }{\partial t} +\frac{\partial \m{g}(\m{x}_d,\m{x}_a) }{\partial \m{x}_a}\frac{\partial \m{x}_a }{\partial t}.
\end{equation}

We define the Jacobian matrices of the implicit algebraic constraints with respect to $\m{x}_a$ and $\m{x}_d$ as $\mG_{\m{x}_a} :=\frac{\partial \m g(\m{x}_d,\m{x}_a) }{\partial \m{x}_a}$ $\in \mathbb{R}^{n_{d}\times n_{a}}$  and $\mG_{\m{x}_d} :=\frac{\partial \m g(\m{x}_d,\m{x}_a) }{\partial \m{x}_d}$ $\in \mathbb{R}^{n_{a} \times n_{d}}$. The time-derivative $\frac{\partial \m{x}_d }{\partial t}:= \dot{\m{x}}_d = \m{f}(\m x_d,\m x_a, \m u)$, refer to~\eqref{X_d}. As such, the algebraic constraints~\eqref{X_a} can be rewritten as~\eqref{X_a_ode}.
	\end{proof}
	

The resulting $\mr{ODE}$-$\mr{DAE}$~\eqref{X_a_ode} representing the algebraic constraints~\eqref{X_a} is now a function of~\eqref{X_d} and therefore depends on control input $\m{u}$. Nevertheless~\eqref{X_a} can be explicitly formulated and then differentiated with respect to $t$, however this requires the complex task of explicitly reconstructing the algebraic equations---which is rather difficult to perform. In contrast, the IFT method allows us to utilize the semi-implicit system~\eqref{eq:semi_NDAE_rep} along with all the dynamic and algebraic relationships that are inherent to the system while also being represented as a set of ODEs. 

It is important to note that this index reduction approach can be applied to nonlinear DAEs of differentiation index $n$. In other words, this theoretical method can also be applied for higher index DAEs, however it requires several rounds of differentiation to reformulate the algebraic constraints into a set of ODE equations.
The application on DAE of index greater than one is outside the scope of this paper and power systems in general---power systems are typically of index one.
\subsection{Approximate Nonlinear DAE Model: $\mr{Approx}$-$\mr{DAE}$}\label{sec:app-ODE-model} 
The second transformation formulates an approximate DAE model $(\mr{Approx}$-$\mr{DAE})$ that is based on introducing a positive scalar term, denoted by $\mu$, to the DAE system at the algebraic constraint equations level. In particular,  the left-hand side of~\eqref{X_a} is replaced by $\mu \dot{\m{x}}_{a}$.  The rationale behind utilizing such simplistic alternative as compared to the first transformation is evident at the discrete-time modeling level; it offers an alternative to dealing with the Hessian matrix computations that arise under the $\mr{ODE}$-$\mr{DAE}$ approach---refer to section~\ref{sec:sim-power-system}. Under such transformation, $\mu > 0$ is defined as a relatively small term that simulates the system's dynamics while satisfying the power flow constraints. As a result, the descriptor system~\eqref{eq:semi_NDAE_rep} can be rewritten as
\begin{subequations}~\label{eq:mu-semi_NDAE_rep}
	\begin{align}
		\dot{\m x}_{d} &=  \m{f}(\m x_d,\m x_a, \m u), \label{X_d_mu} \\
		\mu\dot{\m{x}}_a  & = \m{g}(\m x_d, \m x_a) +\m{O}(\mu),\label{X_a_mu}
	\end{align} 
\end{subequations}
where the approximation error $\m{O}(\mu)$ is of order $\mu$, such that as $\mu \rightarrow 0$ the error becomes null.
\begin{myrem}~\label{rem:mu}
	The proposed $\mr{Approx}$-$\mr{DAE}$ model is not the model reduction technique commonly referred to as the singular perturbation technique
	~\cite{Chow1990,McClamroch1994,Shen2020,Lara2023}.
	Singular perturbation is applied to dynamical systems that have multiple time-scale dynamics, including electrical systems, to reduce the number of simulated states; it results in a simplified model representation. 
\end{myrem}
\begin{table}[t]
	\centering 
	\caption{Discretized Dynamics of the Algebraic Constraints}
	\label{tab:DAE-ODE_disc}
	\begin{tabular}{l|l}
		\midrule \hline
		$\hspace{-0.15cm}\mr{DAE}\hspace{-0.15cm}$
		&\multicolumn{1}{c}{$\hspace{-0.15cm} \m{0} \hspace{-0.05cm} = \hspace{-0.05cm} \begin{cases}
				\hspace{-0.05cm} - \tilde{h} \m{g}(\m{x}_{k})
				& \hspace{-0.15cm} \text{for BDF}\\
				\hspace{-0.05cm}  - \tilde{h} (\m{g}(\m{x}_{k})+\m{g}(\m{x}_{k-1}))
				& \hspace{-0.15cm} \text{for TI}
			\end{cases} \hspace{-0.1cm} $} \\ \hline 
		$\hspace{-0.15cm}\mr{ODE}$-$\mr{DAE}\hspace{-0.15cm}$
		&\multicolumn{1}{c}{$\hspace{-0.15cm} \m{0} \hspace{-0.05cm} = \hspace{-0.05cm} \begin{cases}
				\hspace{-0.05cm} \m{x}_{a,k} -\Sigma_{s=1}^{k_g}\alpha_s \m {x}_{a,k-s} - \tilde{h} \tilde{\m{g}}(\m{z}_{k}) 
				&\hspace{-0.15cm} \text{for BDF}\\
				\hspace{-0.05cm} \m{x}_{a,k} -\m {x}_{a,k-1} - \tilde{h} (\tilde{\m{g}}(\m{z}_{k})+\tilde{\m{g}}(\m{z}_{k-1}))
				&\hspace{-0.15cm} \text{for TI}
			\end{cases} \hspace{-0.1cm} $} \\ \hline
		$\hspace{-0.15cm}\mr{Approx}$-$\mr{DAE} \hspace{-0.15cm}$
		&\multicolumn{1}{c}{$\hspace{-0.15cm} \m{0} \hspace{-0.05cm} = \hspace{-0.05cm} \begin{cases}
				\hspace{-0.05cm} \mu \m{x}_{a,k} -\Sigma_{s=1}^{k_g}\alpha_s \mu \m {x}_{a,k-s} - \tilde{h} \m{g}(\m{x}_{k})
				\hspace{-0.1cm}& \hspace{-0.15cm} \text{for BDF}\\
				\hspace{-0.05cm} \mu\m{x}_{a,k} - \mu\m{x}_{a,k-1}- \tilde{h} (\m{g}(\m{x}_{k})+\m{g}(\m{x}_{k-1})) \hspace{-0.1cm}& \hspace{-0.15cm} \text{for TI}
			\end{cases} \hspace{-0.25cm} $} \\
		\toprule \bottomrule
	\end{tabular}
\end{table}

Herein, we utilize the small positive scalar $\mu$ to reduce to stiffness of the DAE model, i.e, transforming the null time-scale constants for the existing algebraic constraints to become of order $\mu$.
The introduced dynamic $\mu \dot{\m{x}}_a$ is an arbitrary modification to the system's behavior and depends highly on the choice of $\mu$. The value of $\mu$ contributes to the stiffness of the system, i.e., the order of $\mu$ defines the time-scale for the algebraic equations that are modeled as dynamic equations. Such that as $\mu \rightarrow 0$, the time-scale becomes null, and therefore the system reverts to having algebraic equations that are extremely stiff. 

\section{Time-domain Modeling for Transient Stability Analysis}~\label{sec:sim-power-system}
The choice of discretization method must rely upon 
the system's stiffness, desired accuracy, and the performance of computation resources. 
Stiff dynamical systems can be identified from time constants on local subsystems that have contrasting magnitudes by a large margin. 
Nonlinear dynamic power systems under transient conditions are in practice modeled as discrete-time state-space models and are solved using numerical methods~\cite{Crow2015}.
Nonlinear DAEs exhibit stiff dynamics and are solved using implicit discretization methods that offer stable computational methods as compared with explicit methods~\cite{Astic1994b}. 

In this work we approach discretizing the nonlinear DAE system and the proposed transformations using BDF, BE, and TI discrete-time modeling methods. Specifically, we investigate how the two transformations perform when embedded within vintage discretization methods. 
BDF depends on its discretization index denoted by $k_g$, which is stable for $ 2\leq k_g \leq 5$. We note that for $k_g=1$, BDF renders into the BE method. For succinctness, we refer to BE as BDF while designating the order of $k_g = 1$. We start by discretizing the differential dynamics~\eqref{X_d} that are common between the presented models. 
To that end, the discrete-time representation of the generator dynamics represented in~\eqref{X_d} can be written as	
\begin{equation}~\label{eq:disc_ssm_NDAE}
	\hspace{-0.3cm} \m{0} =
	\begin{cases}
		\hspace{-0.05cm} \m{x}_{d,k} -\Sigma_{s=1}^{k_g}\alpha_s \m {x}_{d,k-s} - \tilde{h} \m{f}(\m{z}_{k})& \hspace{-0.1cm} \text{for BDF,}\\
		\hspace{-0.05cm} \m{x}_{d,k} -\m {x}_{d,k-1} - \tilde{h} (\m{f}(\m{z}_{k})+\m{f}(\m{z}_{k-1}))&\hspace{-0.1cm} \text{for TI,}
	\end{cases} \hspace{-0.4cm} \\
\end{equation}
where vector $\m{z}_{k} :=[\m{x}_{d,k}, \m{x}_{a,k},\m{u}_{k}]^{\top}$ and $\m{x}_{k} :=[\m{x}_{d,k}, \m{x}_{a,k}]^{\top}$ for time step $k$. The discretization time step size $\tilde{h}$ is defined as $\tilde{h}  := \beta h \;\; \text{for} \;\text{BDF and}\; 0.5h  \;\; \text{for} \; \text{TI}$, where $h$ is the simulation time step size.


The discretization constants $\beta$ and $\alpha_s$ for BDF method depend on the order of index $k_g$ and are calculated as 
	\begin{equation}~\label{eq:beta_alpha}
			\beta = \Big(\sum_{s=1}^{k_g} \frac{1}{s}\Big)^{-1}  \;,\;\;\; \alpha_s = (-1)^{(s-1)} \beta \sum_{j=s}^{k_g} \frac{1}{j}  \begin{pmatrix}
					j \\ s
				\end{pmatrix}.
		\end{equation}

The implicit discrete-time representation of the algebraic constraints for the descriptor system~\eqref{eq:semi_NDAE_rep} and the proposed models are summarized in Table~\ref{tab:DAE-ODE_disc}.
Solving discrete-time models requires an integrative time-step algorithm~\cite{Lara2023}. The Newton-Raphson (NR) method~\cite{Conejo2018, Milano2022} is generally implemented to solve the implicit discrete-time nonlinear descriptor dynamics.

\begin{table}[t]
	\centering 
	\caption{Variations in the Jacobian of ODE-Transformed Systems}
	\label{tab:Jac-DAE-ODE_disc}
	\renewcommand{\arraystretch}{2}
	\begin{tabular}{l|l}
		\hline \hline 
		$\hspace{-0.15cm}\mr{ODE}$-$\mr{DAE}\hspace{-0.15cm}$
		&\multicolumn{1}{c}{$\begin{bmatrix}
				-\tilde{h}\widetilde{\m{G}}_{\m{x}_{d}}(\m{z}^{(i)}_{k}) & \eye_{n_a}-\tilde{h} \widetilde{\m{G}}_{\m{x}_{a}}(\m{z}^{(i)}_{k})
			\end{bmatrix}$} \\ \hline
		$\hspace{-0.15cm}\mr{Approx}$-$\mr{DAE} \hspace{-0.15cm}$
		&\multicolumn{1}{c}{$\begin{bmatrix}
				-\tilde{h}\m{G}_{\m{x}_{d}}(\m{x}^{(i)}_{k}) & \mu\eye_{n_a}-\tilde{h} \m{G}_{\m{x}_{a}}(\m{x}^{(i)}_{k})
			\end{bmatrix}$} \\
		\toprule \bottomrule
	\end{tabular}
\end{table}

To implement the NR method, the Jacobian pertaining to the nonlinear dynamics is evaluated. At each time step $k$, the increment $\Delta \m{x}_{k}^{(i)}$ defined as~\eqref{eq:mu-newton_raph} is evaluated and used to update the state variables $\m{x}^{(i+1)}_{k}= \m{x}_{k}^{(i)} + \Delta \m{x}_{k}^{(i)}$ under the NR iteration $i$ until a convergence criterion is satisfied. Once NR iteration converges, time step $k$ advances until the dynamics over time span $t$ is simulated. The iteration increment $\Delta \m{x}^{(i)}_{k}$ can be written as
\begin{equation}~\label{eq:mu-newton_raph}
	\Delta \m{x}^{(i)}_{k} = \left[\mA_{g}(\m{z}^{(i)}_{k})\right]^{-1}\begin{bmatrix}
		\m{\phi}(\m{z}^{(i)}_{k})
	\end{bmatrix},
\end{equation}
where $\m{z}^{(i)}_{k} :=[\m{x}^{(i)}_{d,k},\m{x}^{(i)}_{a,k},\m{u}^{(i)}_{k}]$, and $\m{\phi}(\m{z}^{(i)}_{k})$ denotes the discretized system dynamics, represented in~\eqref{eq:disc_ssm_NDAE} and Table~\ref{tab:DAE-ODE_disc}, under the NR iteration index $i$. The Jacobian of $\m{\phi}(\m{z}^{(i)}_{k})$ is defined as $\m{A}_{g}(\m{z}^{(i)}_{k}) := \begin{bmatrix}
	\tfrac{\partial \m{\phi}(\m{z}^{(i)}_{k})}{\partial \m{x}_k}
\end{bmatrix}$ and can be written as
\begin{equation}~\label{eq:Jac_Newton_Raph}
	\hspace{-0.3cm} \m{A}_{g}(\m{z}^{(i)}_{k})
	\hspace{-0.05cm} = \hspace{-0.05cm} \begin{bmatrix}
		\eye_{n_d}-\tilde{h} \m{F}_{\m{x}_{d}}(\m{z}^{(i)}_{k}) & -\tilde{h} \m{F}_{\m{x}_{a}}(\m{z}^{(i)}_{k}) \\
		-\tilde{h}\m{G}_{\m{x}_{d}}(\m{x}^{(i)}_{k}) & -\tilde{h} \m{G}_{\m{x}_{a}}(\m{x}^{(i)}_{k})
	\end{bmatrix},
\end{equation}
where matrices $\m{F}_{\m{x}_{d}}(\cdot) \in \mathbb{R}^{n_{d}\times n_{d}}$ and $\m{F}_{\m{x}_{a}}(\cdot)$ $\in \mathbb{R}^{n_{d}\times n_{a}}$ are the Jacobians of~\eqref{eq:disc_ssm_NDAE} with respect to $\m{x}_{d}$ and $\m{x}_{a}$. Matrix $\eye_{n_d}$ is an identity matrix of dimension similar to $\m{F}_{\m{x}_{d}}(\cdot)$. Matrices $\m{G}_{\m{x}_{d}}(\cdot)$ and $\m{G}_{\m{x}_{a}}(\cdot)$ retain the same definition as before, however under the discretized dynamics presented in Table~\ref{tab:DAE-ODE_disc}.
\begin{figure}[h]
	\centering
	\hspace{-0.2cm}
	\subfloat{\includegraphics[keepaspectratio=true,scale=0.4]{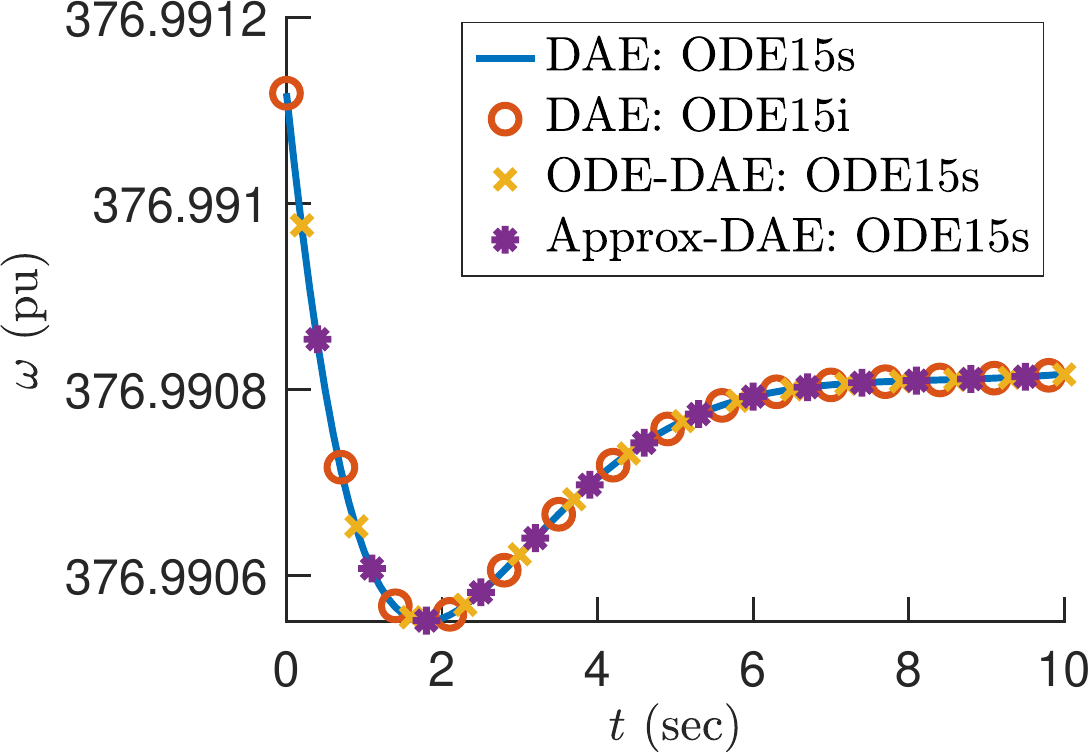}}\vspace*{-0.35cm} 	\hspace{-0.2cm}
	\subfloat{\includegraphics[keepaspectratio=true,scale=0.4]{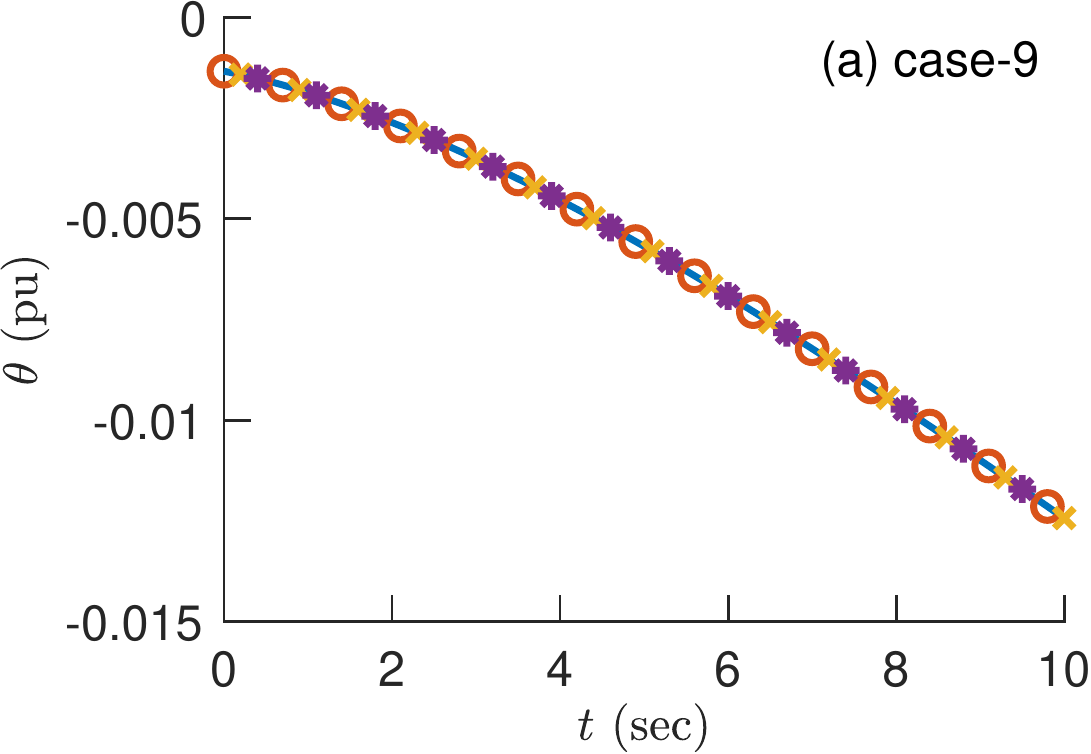}}{}{}
	\subfloat{\includegraphics[keepaspectratio=true,scale=0.4]{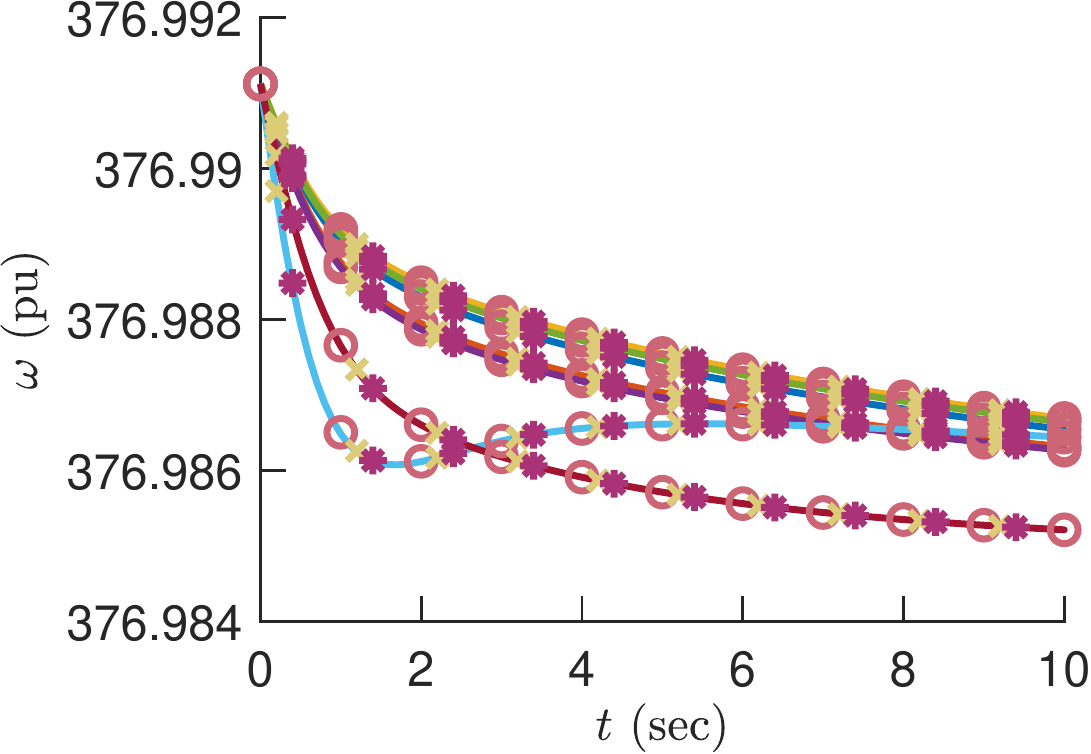}}
	\subfloat{\includegraphics[keepaspectratio=true,scale=0.4]{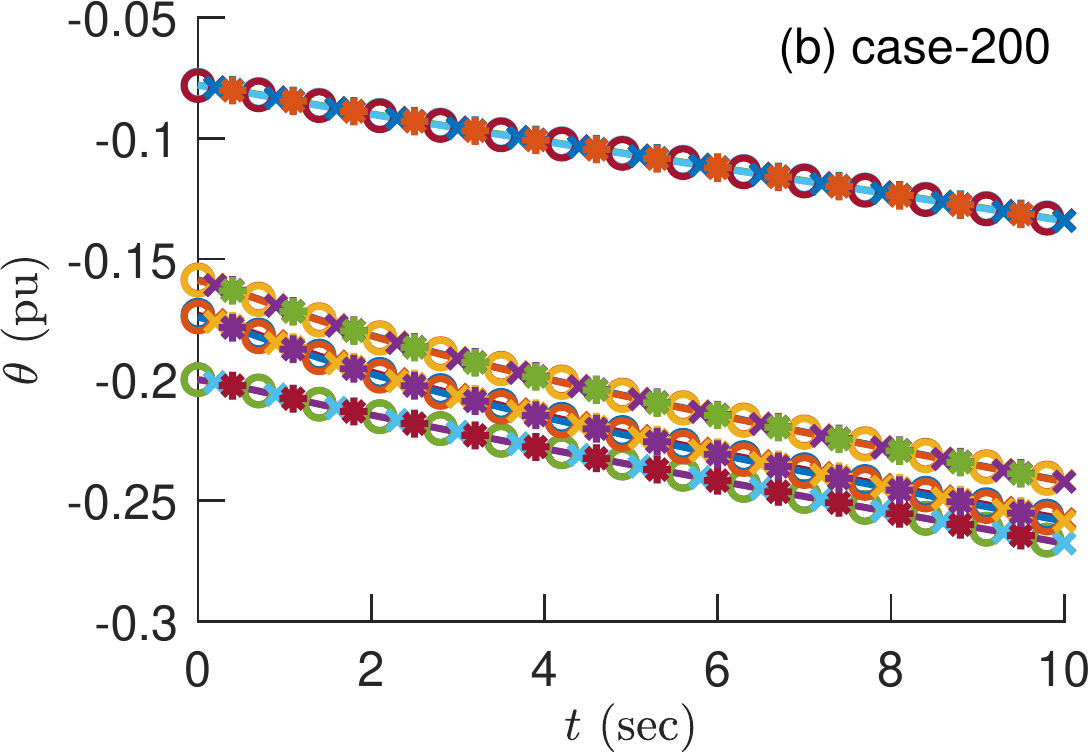}}{}{}
	\vspace{-0.4cm}
	\caption{Transient differential and algebraic state trajectories under load disturbance: (a) $\mr{case}$-$\mr{9}$ ($\alpha_{L} = 	2\%$) and (b) $\mr{case}$-$\mr{200}$ ($\alpha_{L} = 15\%$).}\label{fig:sim_cont}
\end{figure}

\begin{figure}[h]
	\centering
	\subfloat{\includegraphics[keepaspectratio=true,scale=1.2]{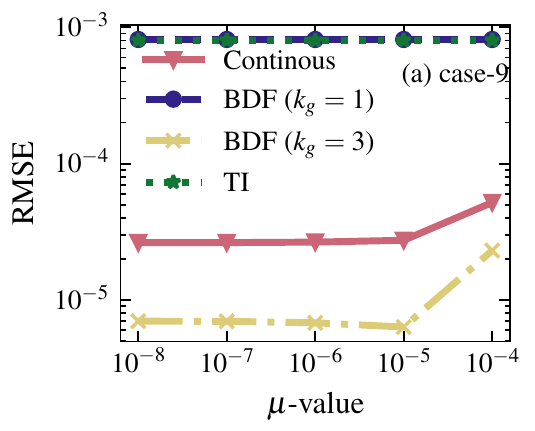}}\vspace{-0.3cm} \hspace{-0.3cm}
	\subfloat{\includegraphics[keepaspectratio=true,scale=1.2]{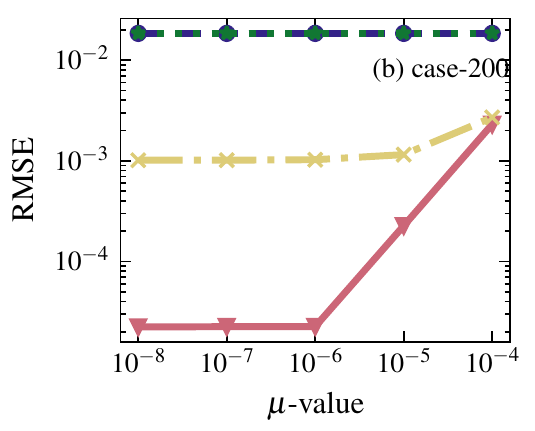}}
	\caption{RMSE on dynamic and algebraic state trajectories for the $\mr{Approx}$-$\mr{DAE}$ model while changing $\mu$-value.}\label{fig:mu}
\end{figure}

Regarding the transformed discrete-time models, the Jacobian of the differential equations remains the same, however, that of the algebraic constraints summarized in Table~\ref{tab:DAE-ODE_disc} differs. 
The Jacobian of the algebraic equations of the transformed systems are represented in Table~\ref{tab:Jac-DAE-ODE_disc}. For the $\mr{Approx}$-$\mr{DAE}$ system, the only difference is with the addition of a $\mu\eye_{n_a}$ perturbation term, where $\eye_{n_a}$ is an identity matrix of dimension similar to $\m{G}_{\m{x}_{a}}(\cdot)$. 

\begin{figure*}[h]
	\centering
	\hspace{-0.2cm}
	\subfloat{\includegraphics[keepaspectratio=true,scale=0.42]{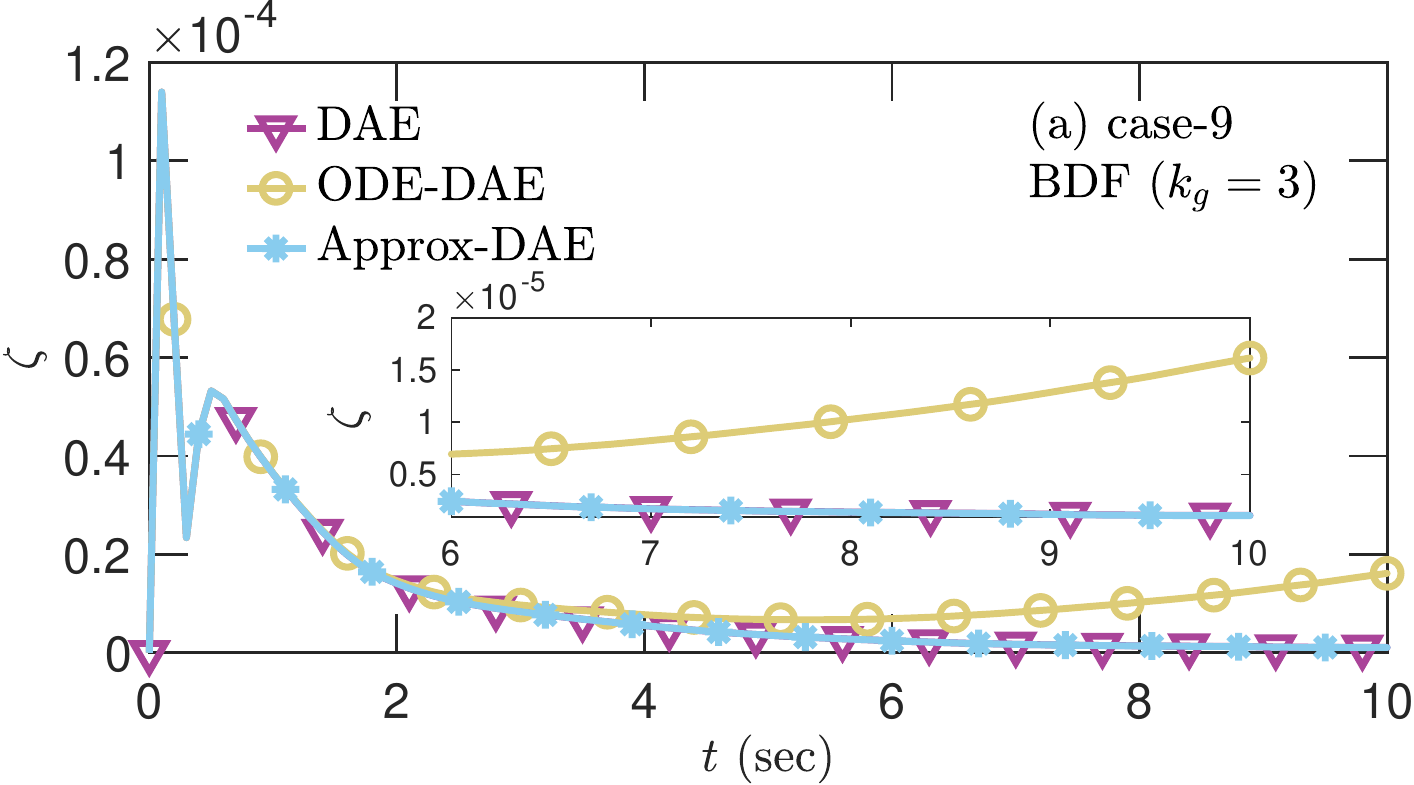}}\vspace{-.38cm}\hspace{0.2cm}
	\subfloat{\includegraphics[keepaspectratio=true,scale=0.42]{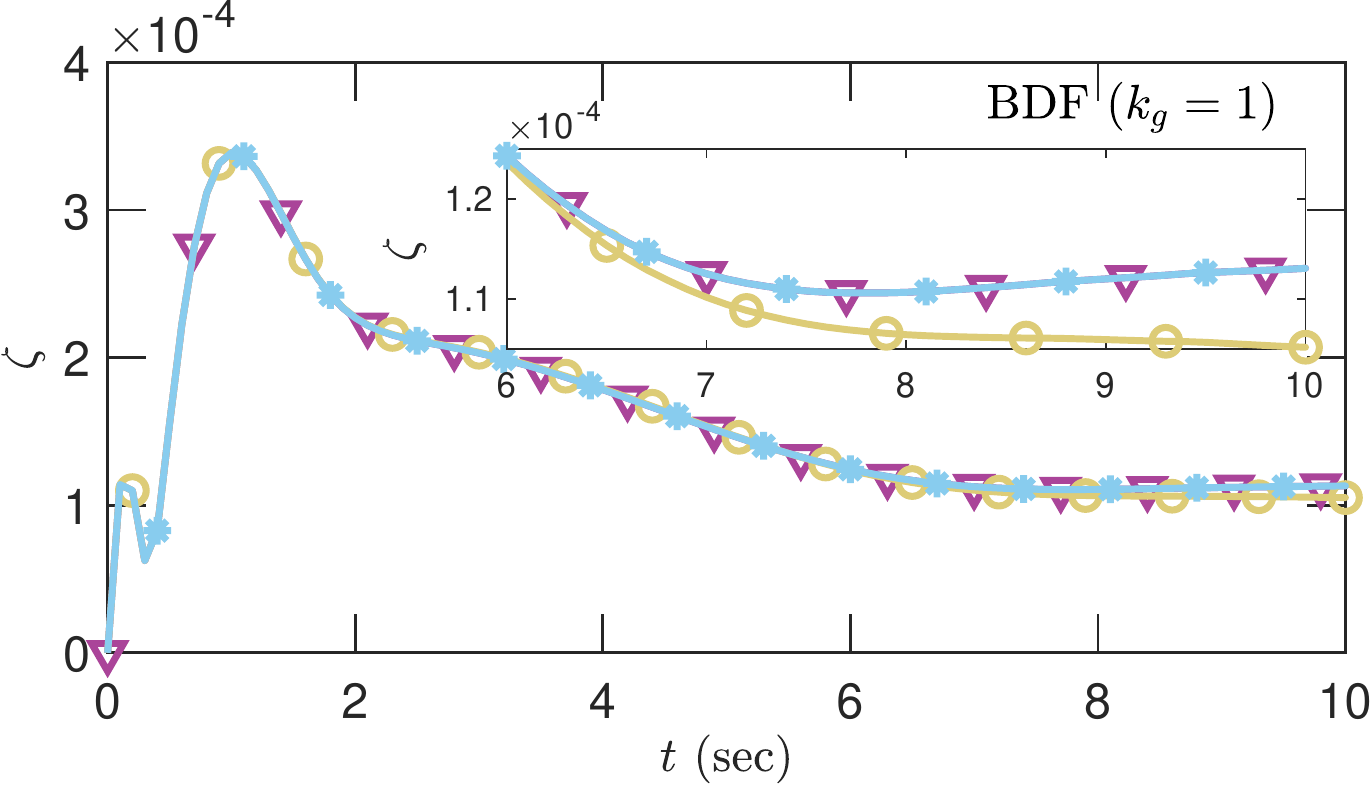}}\hspace{0.2cm}
	\subfloat{\includegraphics[keepaspectratio=true,scale=0.42]{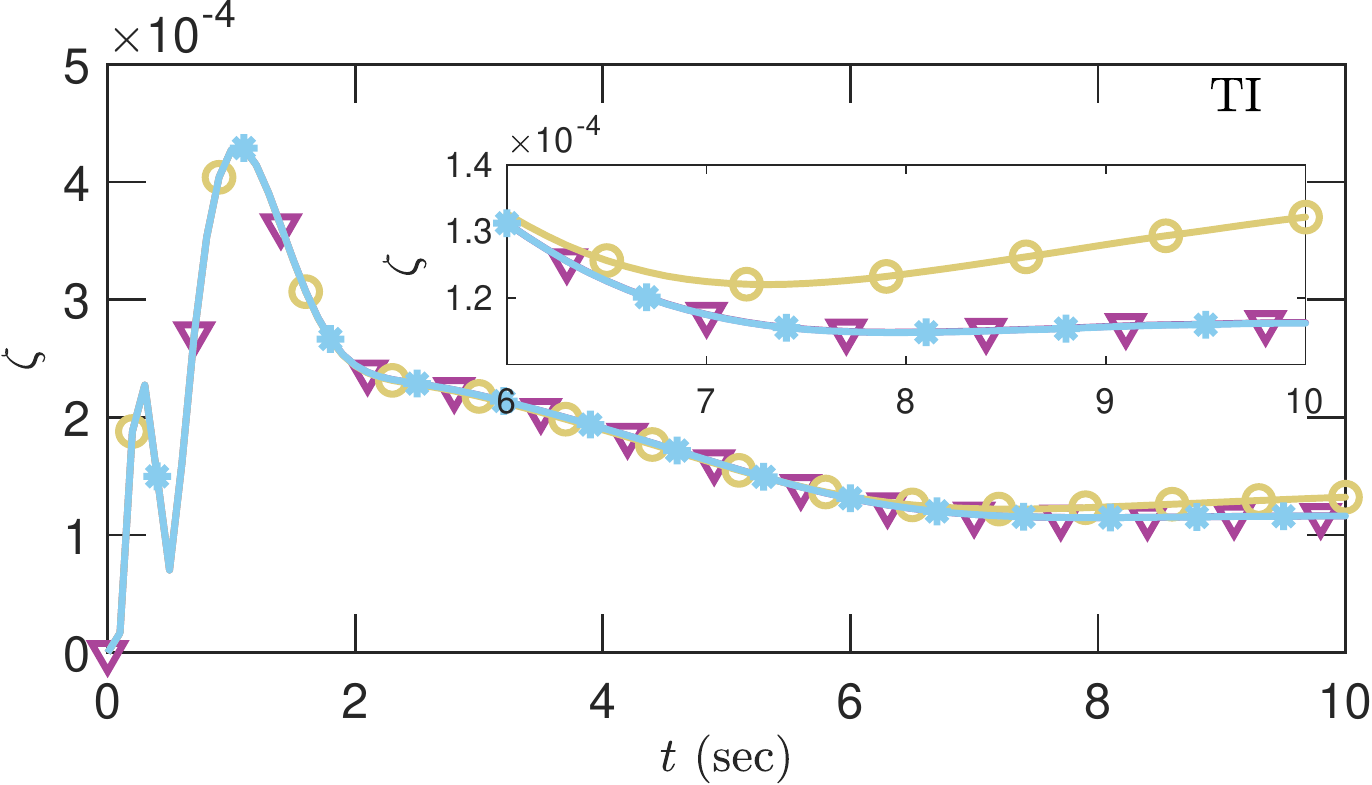}}{}{} \centering
	\subfloat{\includegraphics[keepaspectratio=true,scale=0.42]{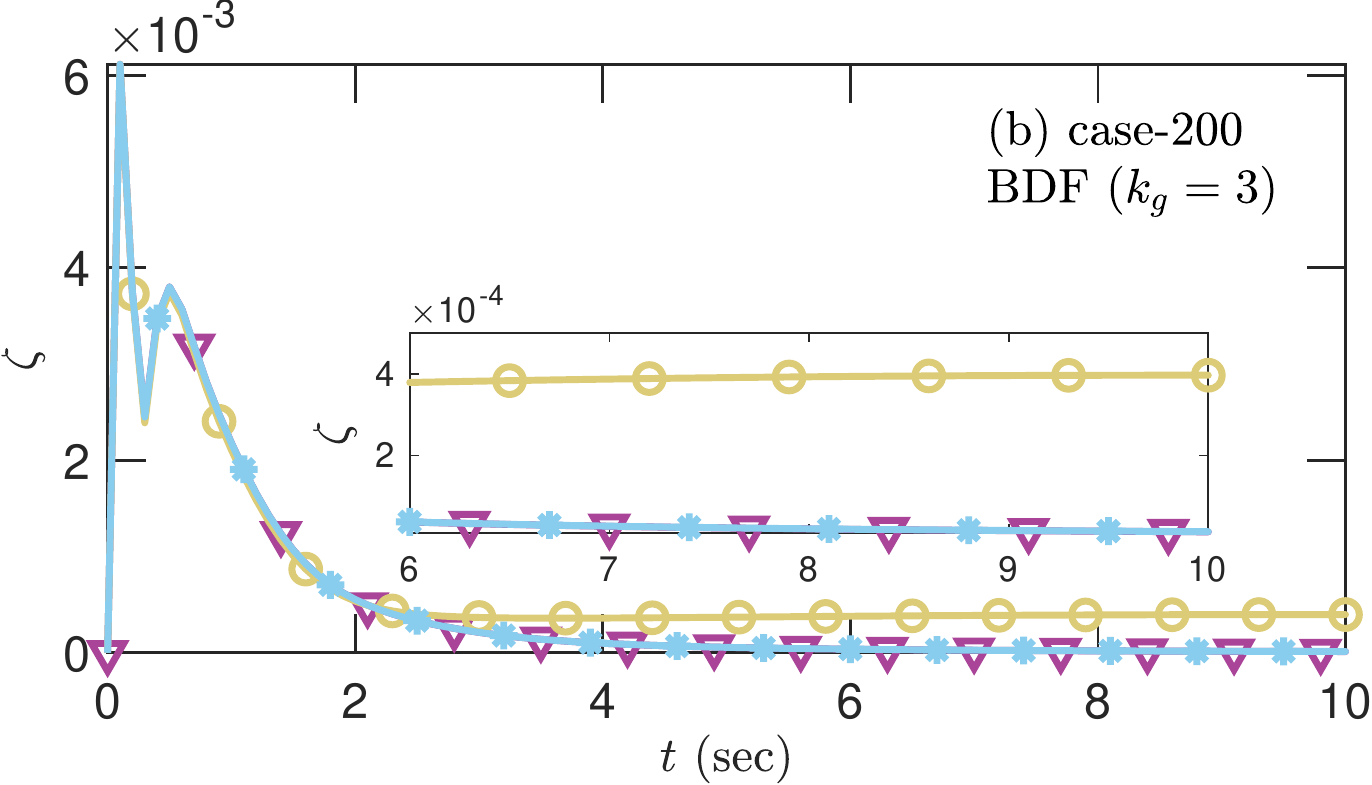}}
	\subfloat{\includegraphics[keepaspectratio=true,scale=0.42]{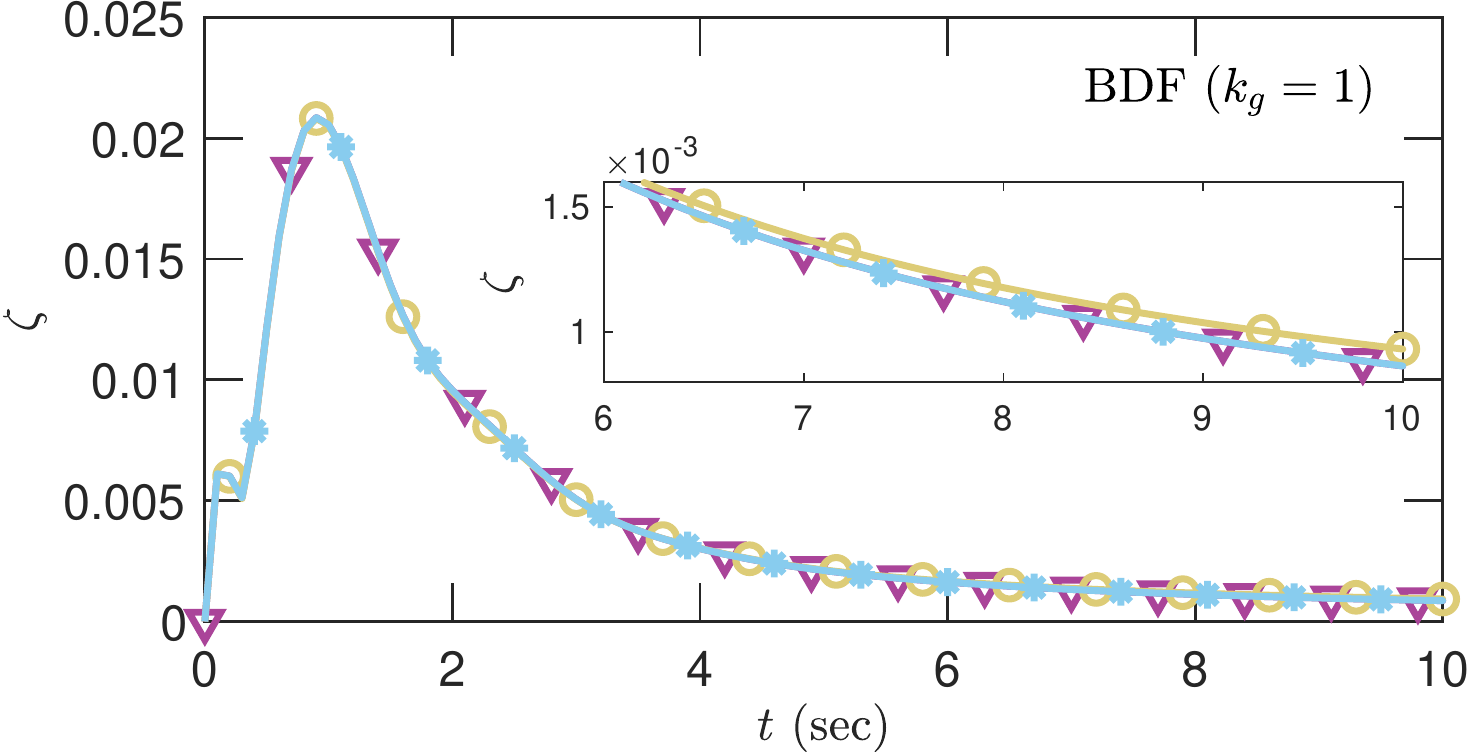}}
	\subfloat{\includegraphics[keepaspectratio=true,scale=0.42]{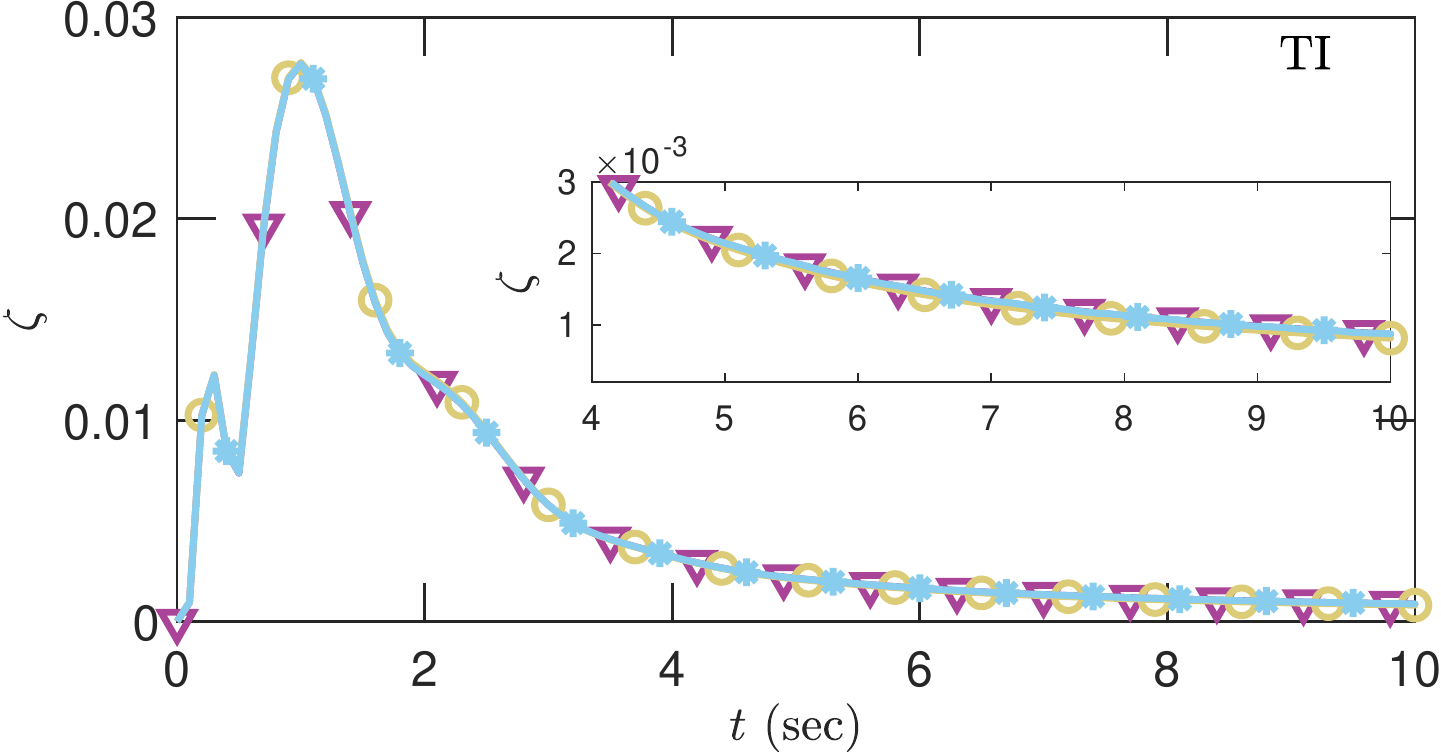}}
	\vspace{-0.3cm}
	\caption{Error norm on discrete-time domain state trajectories---under load disturbance---for the proposed model simulated under different discretization methods: (a) $\mr{case}$-$\mr{9}$ ($\alpha_{L} = 2\%$) and (b) $\mr{case}$-$\mr{200}$ ($\alpha_{L} = 15\%$).}\label{fig:sim-disc}
\end{figure*}

For the $\mr{ODE}$-$\mr{DAE}$ system, matrices $\widetilde{\m{G}}_{\m{x}_{d}}(\cdot) := \frac{\partial \tilde{\m{g}}(\m{x}_d,\m{x}_a,\m{u})}{\partial\m{x}_d} \in \mathbb{R}^{n_{a}\times n_{d}}$ and $\widetilde{\m{G}}_{\m{x}_{a}}(\cdot)  :=\frac{\partial\tilde{\m{g}}(\m{x}_d,\m{x}_a,\m{u})}{\partial\m{x}_a} \in \mathbb{R}^{n_{a}\times n_{a}}$ define the Jacobian of the algebraic equations~Table~\ref{tab:DAE-ODE_disc}, row 2 . For brevity, we define Jacobian $\tilde{\m{G}}_{\m{x}_{d}}(\cdot)$ in~\eqref{eq:G_tilde_d} and leave $\widetilde{\m{G}}_{\m{x}_{a}}(\cdot)$ for the reader to infer.


\begin{equation}~\label{eq:G_tilde_d}
	\begin{split}
		\hspace{-0.3cm} \widetilde{\m{G}}_{\m{x}_{d}}(\m{z}^{(i)}_{k}) := -(&-\m{G}_{\m{x}_{a}}^{-1} \tfrac{\partial \m{G}_{\m{x}_{a}}}{\partial \m{x}_{d}}\m{G}_{\m{x}_{a}}^{-1}  \m{G}_{\m{x}_{d}}\m{f}(\cdot) \\
		&+\m{G}_{\m{x}_{a}}^{-1} \tfrac{\partial \m{G}_{\m{x}_{d}}}{\partial \m{x}_{d}} \m{f}(\cdot)
		+ \m{G}_{\m{x}_{a}}^{-1} \m{G}_{\m{x}_{d}} \m{F}_{\m{x}_{d}}).
	\end{split}
\end{equation}

We note here that due to the existence of Jacobian matrices within the state-space formulation~\eqref{X_a_ode} of the IFT model, Hessian matrices $\m{G}_{\m{x}_{a,d}}:=\tfrac{\partial \m{G}_{\m{x}_{a}}}{\partial \m{x}_{d}}\in \mathbb{R}^{n_{a} \times n_{d}}$ and $\m{G}_{\m{x}_{d,d}} := \tfrac{\partial \m{G}_{\m{x}_{d}}}{\partial \m{x}_{d}}\in \mathbb{R}^{n_{a} \times n_{d}}$ appear under the NR iteration. 
To avoid the computationally convoluted evaluation of Hessian matrices, we refer to finite difference (FD) approximation techniques
	~\cite{Mickens1994,Venk2014}
	to approximate the matrices. FD approximations are based on Taylor series expansions of the system around a consistent operating point. For the approximations herein, we refer to the central difference approximation~\cite{Venk2014} which yields a lower error than the other difference methods. 
	As such $\m{G}_{\m{x}_{a,d}}$ can be approximated as~\eqref{eq:CDT} with difference variable $m \rightarrow 0$.
	\begin{equation}~\label{eq:CDT}
		\hspace{-0.3cm}	\m{G}_{\m{x}_{a,d}} = \hspace{-0.05cm} \tfrac{\m{G}_{\m{x}_{a}}(\m{x}^{(i)}_{d,k-m},\m{x}^{(i)}_{a,k}) -
			\m{G}_{\m{x}_{a}}(\m{x}^{(i)}_{d,k+m},\m{x}^{(i)}_{a,k})}{2m} + \m{O}(m^{2}),
	\end{equation}
	where $\m{G}_{\m{x}_{a,d}}$ represents the central difference approximation~\cite{Venk2014}, which yields a truncation error $\m{O}(m^{2})$.
\section{Case Studies}~\label{sec:case-studies}

The main objective of transient time-domain simulations is to trace the system's trajectory---after a disturbance---towards equilibrium~\cite{Lara2023}. Accordingly, we investigate the viability and effectiveness of performing both continuous- and discrete-time transient stability simulations of the proposed systems following a load disturbance.
Two networks of contrasting sizes are chosen: Western System Coordinating Council (WSCC) 9-Bus network ($\mr{case}$-$\mr{9}$), and ACTIVSg200-Bus network ($\mr{case}$-$\mr{200}$). 

The transient time-domain simulations are performed in MATLAB R2021b running on a Macbook Pro having an Apple M1 Pro chip with a 10-core CPU and 16 GB of RAM.
The baseline model herein refers to the nonlinear DAE~\eqref{eq:semi_NDAE_rep} and is simulated using MATLAB ODE/DAE solver $\mr{ode15s}$ and $\mr{ode15i}$. 
The settings chosen for the solvers are: $(i)$ absolute tolerance as $1\times 10^{-06}$, $(ii)$ relative tolerance as $1\times 10^{-05}$ and $(iii)$ maximum step size equal to $0.001$. For the transient discrete-time domain simulations, the discretization step size is set to $h = 0.1$ and the NR algorithm parameters are: $(i)$ absolute tolerance on $\mathcal{L}_2$--norm of iteration convergence as $10^{-2}$ and $(ii)$ maximum iterations as $10$.

Generator parameters are extracted from the power system toolbox (PST). Regulation and chest time constants for the generators are chosen as $R_{\mr{D}i} = 0.2 \; \mr{Hz/pu}$ and $T_{\mr{CH}i} = 0.2 \; \mr{sec}$. The steady state initial conditions for the power system are generated from the power flow solution obtained using MATPOWER. 
The synchronous speed is set to $\omega_{0} = 120\pi \;  \mr{rad/sec}$ and a power base of $100 \; \mr{MVA}$ is considered for the power system.

Starting from the initial steady state conditions, a load disturbance at $t > 0$ on initial load $(\mr{P}_\mr{L}^0,\mr{Q}_\mr{L}^0)$ is introduced. The perturbed magnitude under a load disturbance $(\alpha_{L})$ is computed as $(\mr{\tilde{P}}_\mr{L}^0,\mr{\tilde{Q}}_\mr{L}^0 ) = (1+\tfrac{\alpha_{L}}{100})(\mr{P}_\mr{L}^0,\mr{Q}_\mr{L}^0 )$. The load disturbance $(\alpha_{L})$ for $\mr{case}$-$\mr{9}$ is chosen within the range of $\{1 \%, 4\% \}$ with respect to the original loads; for test  $\mr{case}$-$\mr{200}$ the range is between $\{5 \%, 20\% \}$.

The continuous-time transient state trajectories representing algebraic state $\theta$ (bus angle) and differential state $\omega$ (generator synchronous rotor speed) for the DAE model and the proposed transformations are depicted in Fig.~\ref{fig:sim_cont}. The proposed models yield accurate transient state trajectory simulations as compared to the baseline DAE model for both test cases.

Before moving forward, we investigate the choice of $\mu$ on the accuracy of the $\mr{Approx}$-$\mr{DAE}$ system as compared with the baseline model. We calculate the root mean square error $(\mr{RMSE})$ of the transformed systems over time period $t$ as $\mr{RMSE} := \sqrt{\tfrac{\sum_{k=1}^{t} \m{e}_{k}^{2}}{t}}$, where $\m{e}_{k} := \abs{\m{\hat{x}}_{k} - \m{x}_{k}}$ is the difference between the states of the baseline DAE model~\eqref{eq:semi_NDAE_rep} denoted as $\m{x}_{k}$ and the states of proposed systems as $\m{\hat{x}}_{k}$. 

Fig.~\ref{fig:mu} presents RMSE for the $\mr{Approx}$-$\mr{DAE}$ model when varying the choice of $\mu$ between $\{1\times10^{-04}, 1\times10^{-08} \}$ under continuous and discrete time-domain simulations. It is illustrated that for both test cases, the choice of $\mu$ affects the error on state trajectories. For TI and BE discretization method, the choice of $\mu$ does not alter the RMSE error. Such methods are single step discretization methods which are inherently less accurate than the multi-step method, BDF, that is typically used for DAE systems. The RMSE for the continuous and BDF case is lower and is influenced by the choice of $\mu$ until it becomes asymptotic after a certain value of $\mu$. For the remainder of this work $\mu$ is chosen to be $1\times 10^{-06}$.


\begin{figure}[t]
	\centering
	\hspace{-0.1cm}
	\subfloat{\includegraphics[keepaspectratio=true,scale=0.72]{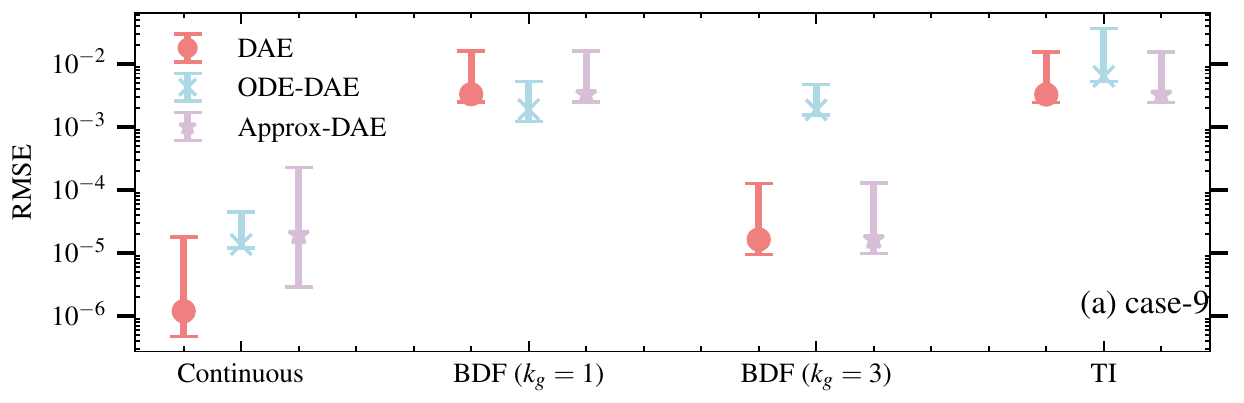}}\vspace{-0.4cm}{}{} 
	\subfloat{\includegraphics[keepaspectratio=true,scale=0.72]{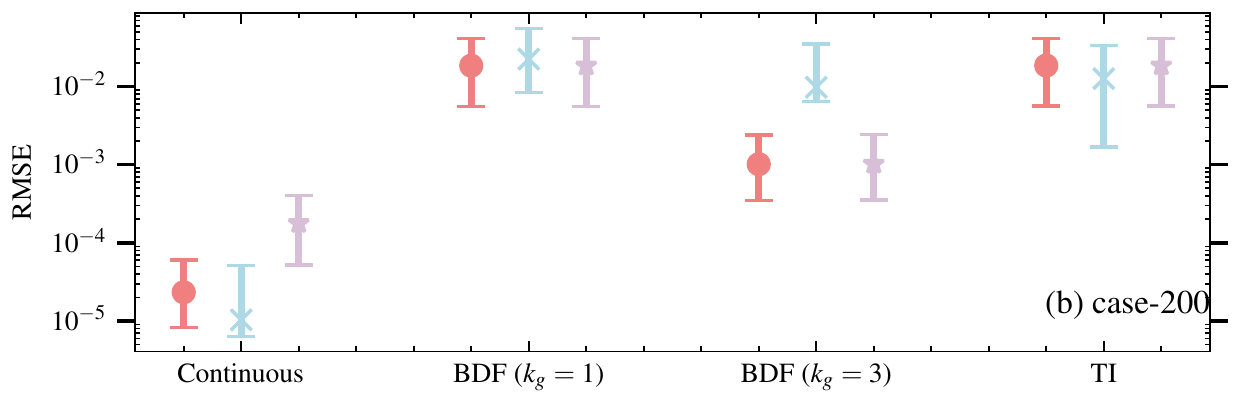}}{}{}
	\vspace{-0.45cm}
	\caption{RMSE on transient dynamic and algebraic states under varying load disturbances.}\label{fig:RMSE}
\end{figure}

To assess the accuracy of the discrete time-domain simulations, we compute the error norm $\zeta :=\norm{\hat{\m x}_k-{\m x}_k}_2$ between the baseline states $\m{x}_{k}$ at time and the states resulting from the proposed models $\m{\hat{x}}_{k}$. Fig.~\ref{fig:sim-disc} illustrates the error resulting from transient discrete time-domain simulation under the different discretization methods. The proposed models accurately depict the transient states; it can be noted that BDF method outperforms BE and TI for the DAE model and the proposed systems. To further investigate the applicability of the methods, we vary the load disturbance magnitude $\alpha_L$ under the aforementioned ranges. The RMSE for $\mr{case}$-$\mr{9}$  and $\mr{case}$-$\mr{200}$ under such load disturbances are depicted in Fig.~\ref{fig:RMSE}.

The performance of the transformed systems yields similar results to that of the DAE system under the continuous and discrete-time models while simulating the states under different transient conditions. We note that the performance under BDF discretization of the $\mr{ODE}$-$\mr{DAE}$ model, suggests that the BDF order $k_g$ might require changing to account for the altered stiffness in the dynamics. Refer to  Fig.~\ref{fig:time-sim} for the computational complexity resulting from simulating the proposed transformations relative to the nonlinear DAE system. The DAE and $\mr{Approx}$-$\mr{DAE}$ models require the same computational effort, however that of the $\mr{ODE}$-$\mr{DAE}$ model shows around a $100\%$ to $200\%$ increase in the computational complexity under discrete-time techniques. This is primarily due to the Hessian approximations~\eqref{eq:G_tilde_d}, whereby such an increase is not evident at the continuous time-domain level. An alternative Taylor series approximation for~\eqref{eq:G_tilde_d} can be chosen that requires less computational effort. Having provided the above results, the validity and accuracy of the proposed models are demonstrated, therefore proving their applicability for transient time-domain simulations of multi-machine power systems.

\section{Summary and Future Work}~\label{Sec:conclusion}
This paper presents transformations that result in solutions to nonlinear DAE power systems modeled as nonlinear ODEs. The validity and accuracy of the two methods have been investigated. The proposed transformations yield accurate depictions of the transient state-space models and can therefore be used in various feedback control or state estimation algorithms---our future work on this topic.
\begin{figure}[t]
	\centering
	\hspace{-0.1cm}
	\subfloat{\includegraphics[keepaspectratio=true,scale=0.72]{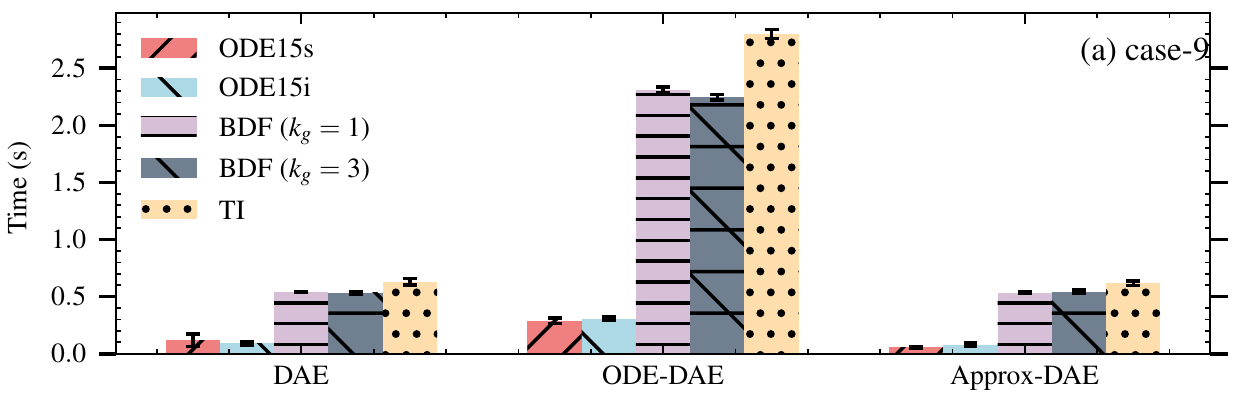}}\vspace{-0.45cm}{}{} 
	\subfloat{\includegraphics[keepaspectratio=true,scale=0.72]{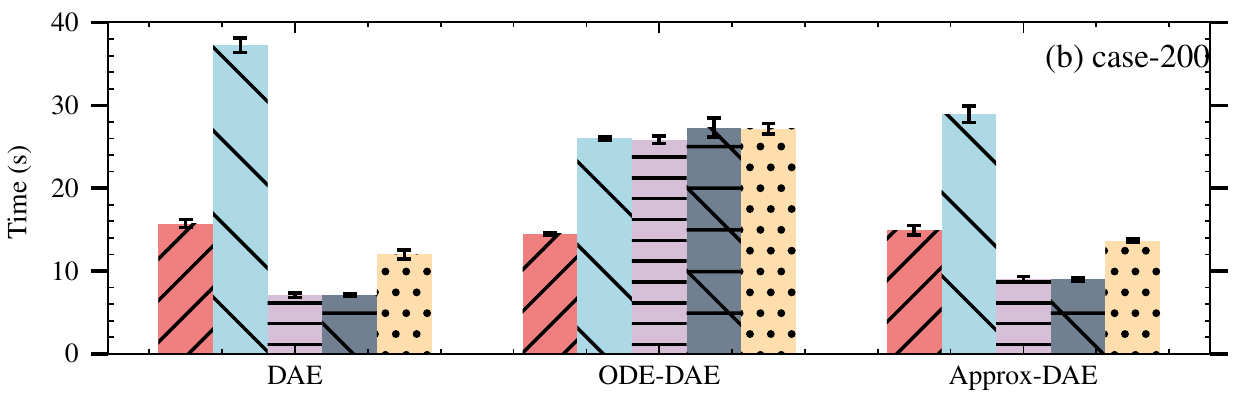}}{}{}
	\vspace{-0.45cm}
	\caption{Computational time for the different time-domain methods and power system models.}\label{fig:time-sim}
\end{figure}

\balance
\normalcolor
\bibliographystyle{IEEEtran}

\bibliography{IEEEabrv,DAE_TO_ODE.bib} 
\end{document}